\newtheorem{theorem}{Theorem}
\newtheorem{proposition}{Proposition}
\newcommand{\R}{\mathbb R}
\newcommand{\Z}{\mathbb Z}
\newcommand{\C}{\mathbb C}
\newcommand{\hi}{\mathcal{H}} 
\newcommand{\mi}{\mathcal{M}} 
\newcommand{\tr}[1]{\mathrm{tr}\left[#1\right]} 
\newcommand{\kb}[2]{|#1\,\rangle\langle\,#2|} 
\newcommand{\Qo}{\mathsf{Q}} 
\newcommand{\Po}{\mathsf{P}} 
\renewcommand{\P}{\mathcal P} 
\newcommand{\A}{\mathsf{A}}
\def\<{\langle}
\def\>{\rangle}
\def\d{{\mathrm d}}
\newcommand{\fii}{\varphi}
\newcommand{\CHI}[1]{\ensuremath{ \chi\raisebox{-1ex}{$\scriptstyle #1$} }}
\newcommand{\ov}{\overline}
\newcommand{\I}{{\cal I}}
\newcommand{\lin}{{\rm lin}}
\begin{document}

\title{Complete Characterization of Pure Quantum Measurements and Quantum Channels}

\author{Juha-Pekka Pellonp\"a\"a}
\email{juhpello@utu.fi}
\affiliation{Turku Centre for Quantum Physics, Department of Physics and Astronomy, University of Turku, FI-20014 Turku, Finland}

\begin{abstract}
We give a complete characterization for pure quantum measurements, i.e., for POVMs which are extremals in the convex set of all POVMs. Such measurements are free from classical noise. The characterization is valid both in discrete and continuous cases, and also in the case of an infinite Hilbert space. We show that sharp measurements are clean, i.e.\ they cannot be irreversibly connected to another POVMs via quantum channels and thus they are free from any additional quantum noise.
We exhibit an example which demonstrates that this result could also be approximately true for pure measurements.
\end{abstract}

\pacs{03.65.Ta, 03.67.-a}

\maketitle


In the modern formalism of quantum mechanics, measurements
are described by positive operator valued measures (POVMs) 
which have found ample applications in various areas of quantum physics, 
ranging from quantum theory of open systems to detection, estimation and quantum information theories.
POVMs generalize the traditional concept of an 'observable', a selfadjoint operator or a projection valued measure (PVM), which turned out to be a too restrictive idealization to efficiently descibe actual experimental settings such as fuzzy position and momentum measurements or photon counting and phase measurements in quantum optics \cite{PSAQT82, kirjat}.

A fundamental problem is to characterize
the most precise and informative measurements of a physical quantity.
One crucial property of such optimal measurements is the lack of noise; classical or quantum.
Therefore, the present letter focuses on the determination of noise-free measurements.
Here we consider two types of noise:
classical noise associated with the randomness due to fluctuations in the measuring procedure and quantum noise due to the possibility of irreversibly manipulating the state before a measurement (using a quantum channel).

Similarly to quantum states, POVMs form a convex set if the measurement outcome space and the Hilbert space are fixed. A convex combination $a\Po(X)+(1-a)\Po'(X)$, $0<a<1$, corresponds to a classical randomization or mixing between two (or more) POVMs $\Po$ and $\Po'$. Such mixing is a source of classical noise.
 Extremal or {\it pure} POVMs do not admit any convex decompositions and are free from classical noise
\cite{PSAQT82}.
For a finite dimensional system, a simple criterion for extremality can be given
 \cite{DA} and Chiribella {\it et al}.\ \cite{ChDASc} showed that all pure measurements are concentrated on a finite number of outcomes. However, in the infinite case, there exist pure (nonsharp) POVMs with continuous measurement outcome spaces \cite{holevo, HePe}. 

In this letter, we fully characterize all pure measurements using a diagonalization technique of Hyt\"onen {\it et al}.\ \cite{HyPeYl}. This result is a generalization of the finite dimensional characterization \cite{DA}.
We also introduce a simple polynomial method for finding pure POVMs in continuous cases. This method is very useful in many areas of quantum physics, e.g.\ in continuous variable quantum information.

Finally, we show that any PVM $\Po$ can be connected to any ($\Po$-continuous) POVM $\Po'$ via a quantum channel $\Phi$ (i.e.\ $\Phi^*(\Po(X))=\Po'(X)$), or in other words, $\Po'$ is a pre-processing of $\Po$ \cite{Bu}.
The pre-processing can change the POVM irreversibly, reducing the information from the measurement. Our result shows that PVMs are {\it clean} \cite{ch} or 'undisturbed' in the sense that they are not irreversibly connected to another POVMs. 
Thus they do not have any additional 'extrinsical'  quantum noise \cite{Bu}.

Let us briefly recall the mathematical description of quantum measurements via \emph{(normalized) positive operator valued measures} (POVMs) \cite{PSAQT82}. Consider a quantum system with a Hilbert space $\hi$ and suppose that the measurement outcomes form a set $\Omega$. 
A POVM is a function $\Po$ which associates to each (m.) subset $X\subseteq\Omega$ a positive operator $\Po(X)$ acting on $\hi$ \cite{sigma}. It is required that for every state (a density matrix) $\varrho$, the mapping
$X\mapsto \tr{\varrho\Po(X)}$
is a probability distribution. Especially, $\Po$ satisfies the normalization condition $\Po(\Omega)=I$. 
The number $\tr{\varrho\Po(X)}$ is the probability of getting a measurement outcome $x$ belonging to $X$, when the system is in the state $\varrho$ and the measurement $\Po$ is performed. 
A POVM $\Po$ is a \emph{projection valued measure} (PVM), or a \emph{sharp} POVM, if $\Po(X)^2=\Po(X)$ for all $X\subseteq\Omega$.  
It is easy to see that PVMs are pure (see a new proof below).

Fix an orthonormal (ON) basis $\{ e_n \}_{n=1}^{\dim\hi}$ of $\hi$, let $V$ be the vector subspace of $\hi$ consisting of finite linear combinations of the basis vectors $e_n$, and let $V^\times$ be the vector space of formal series $\sum_{n} c_n e_n$, $c_n\in\C$.
Define a probability distribution $\mu(X):=\sum_{n=1}^{\dim\hi}\lambda_n\<e_n|\Po(X)e_n\>$ where $\lambda_n>0$ and
$\sum_n\lambda_n=1$ \cite{mu}. Any POVM $\Po$ can be diagonalized in the following way:

\begin{theorem} \label{th1}
a)
There are (m.) mappings $n(x)\in\{0,1,\ldots,\dim\hi\}$ and $d_k(x)\in V^\times\setminus\{0\}$ such that
$$
\<\fii|\Po(X)\psi\>=\int_X \sum_{k=1}^{n(x)} \<\fii|d_k(x)\>\<d_k(x)|\psi\>\d\mu(x),\hspace{0.2cm}\fii,\,\psi\in V.
$$
\\
b) There are (m.) maps $\psi_m(x)\in\hi_{n(x)}\subseteq\hi$ such that
\begin{eqnarray*}
\Po(X)&=&\sum_{n,m=1}^{\dim\hi}\int_X\<\psi_n(x)|\psi_m(x)\>\d\mu(x)\kb{e_n}{e_m} \\
&=&\Big(\sum_m\kb{\CHI X\psi_m}{e_m}\Big)^*\Big(\sum_m\kb{\CHI X\psi_m}{e_m}\Big)
\end{eqnarray*}
(weakly) and the set of linear combinations of vectors $\CHI X \psi_m$ is dense in 
$\int^\oplus_\Omega\hi_{n(x)}\d\mu(x)$ (the minimal Kolmogorov decomposition).
\\
c) $\Po(X)=J^*\ov\Po(X)J$ where $J:=\sum_{m=1}^{\dim\hi}\kb{\psi_m}{e_m}$ (is an isometry, i.e., $J^*J=I$) and $\ov\Po(X)=\CHI X$ is the (canonical) PVM on  
$\hi_{\ov\Po}:=\int^\oplus_\Omega\hi_{n(x)}\d\mu(x)$ (the minimal Naimark dilation).
\\
d) $\Po$ is a PVM if and only if $\{\psi_m\}_{m=1}^{\dim\hi}$ is an ON basis of $\hi_{\ov\Po}$. Then $\hi_{\ov\Po}$ can be identified with $\hi$ (i.e.\ $JJ^*=I$ and $J$ is a unitary operator). \cite{Dirac}
\end{theorem}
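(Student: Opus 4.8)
The plan is to use that $\mu$ dominates every matrix element of $\Po$, extract an operator density by Radon--Nikodym, and then diagonalize that density pointwise and measurably in $x$. As a first step I would verify absolute continuity: for $\fii,\psi\in V$ the complex measure $X\mapsto\langle\fii|\Po(X)\psi\rangle$ is a finite combination of the entries $X\mapsto\langle e_n|\Po(X)e_m\rangle$, and positivity of $\Po(X)$ together with Cauchy--Schwarz gives $|\langle e_n|\Po(X)e_m\rangle|^2\le\langle e_n|\Po(X)e_n\rangle\langle e_m|\Po(X)e_m\rangle\le(\lambda_n\lambda_m)^{-1}\mu(X)^2$, so each entry vanishes whenever $\mu(X)=0$. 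The Radon--Nikodym theorem then supplies densities $p_{nm}$ with $\langle e_n|\Po(X)e_m\rangle=\int_X p_{nm}(x)\,\d\mu(x)$, and for $\mu$-a.e.\ $x$ the array $(p_{nm}(x))$ is positive semidefinite.

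For part a) I would diagonalize this density. For fixed $x$ the array $(p_{nm}(x))$ is a positive sesquilinear form on $V$; decomposing it into rank-one pieces yields $d_k(x)$ with $p_{nm}(x)=\sum_{k=1}^{n(x)}\langle e_n|d_k(x)\rangle\langle d_k(x)|e_m\rangle$, where $n(x)$ is the rank. Since the density need not be a bounded operator on $\hi$, the $d_k(x)$ must be allowed to lie in the space of formal series $V^\times$ rather than in $\hi$. The real difficulty, and the point where I would invoke the diagonalization technique of \cite{HyPeYl}, is to perform this simultaneously for a.e.\ $x$ so that $n(x)$ and every $d_k(x)$ depend measurably on $x$; this measurable-selection step is delicate in infinite dimensions and is the crux of the whole theorem.

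Parts b) and c) then amount to repackaging the diagonalization as a dilation. Taking an ON basis $\{f_k(x)\}$ of an $n(x)$-dimensional space $\hi_{n(x)}$ and letting $\psi_m(x)$ have components $\langle d_k(x)|e_m\rangle$ in this basis produces the Gram relation $\langle\psi_n(x)|\psi_m(x)\rangle=p_{nm}(x)$, which is exactly the weak formula in b). Assembling the fibres into $\hi_{\ov\Po}=\int^\oplus_\Omega\hi_{n(x)}\,\d\mu(x)$, setting $\ov\Po(X)=\CHI X$, and defining $J$ by $Je_m:=\psi_m(\cdot)$, a direct computation gives $\langle Je_n|\ov\Po(X)Je_m\rangle=\int_X\langle\psi_n(x)|\psi_m(x)\rangle\,\d\mu(x)=\langle e_n|\Po(X)e_m\rangle$, i.e.\ $\Po(X)=J^*\ov\Po(X)J$. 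The normalization $\Po(\Omega)=I$ gives $\langle Je_n|Je_m\rangle=\delta_{nm}$, so $J$ is an isometry; passing to the closure of $\lin\{\CHI X\psi_m\}$ inside $\hi_{\ov\Po}$ makes the Kolmogorov decomposition, and hence the Naimark dilation, minimal.

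Finally, part d) is a short argument once the dilation is built. Because $J$ is isometric, $\langle\psi_n|\psi_m\rangle=\langle Je_n|Je_m\rangle=\delta_{nm}$, so $\{\psi_m\}$ is automatically an orthonormal system in $\hi_{\ov\Po}$ and the only issue is completeness. If $\{\psi_m\}$ is an ON basis then $J$ is onto, hence unitary, and $JJ^*=I$ combined with $\ov\Po(X)^2=\ov\Po(X)$ yields $\Po(X)^2=J^*\ov\Po(X)JJ^*\ov\Po(X)J=J^*\ov\Po(X)J=\Po(X)$, so $\Po$ is sharp. Conversely, if $\Po$ is a PVM, then expanding $\|\ov\Po(X)Je_m-J\Po(X)e_m\|^2$ and using $\Po(X)^2=\Po(X)$, $J^*J=I$, $\ov\Po(X)^2=\ov\Po(X)$ shows it vanishes, so $\ov\Po(X)J=J\Po(X)$; minimality then forces $\ran J=\hi_{\ov\Po}$, i.e.\ $J$ is unitary and $\{\psi_m\}$ is complete, which also identifies $\hi_{\ov\Po}$ with $\hi$. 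I expect the measurable diagonalization of part a) to be the main obstacle; granting it, parts b)--d) are essentially bookkeeping with direct integrals and the isometry $J$.
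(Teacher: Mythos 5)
Your proposal is correct, and on the one genuinely hard step it takes the same route as the paper: the paper's entire proof consists of citations to Hyt\"onen--Pellonp\"a\"a--Ylinen \cite{HyPeYl} (Theorems 4.5, 5.1, 3.6 and Corollary 5.2), and the step you explicitly could not make self-contained --- the measurable selection of $n(x)$ and of the $d_k(x)$ --- is exactly what those cited theorems provide, so your deferral is consistent with the paper's own reliance on that reference. What you add beyond the paper is a reconstruction of the skeleton that the citations hide: the Cauchy--Schwarz estimate $|\<e_n|\Po(X)e_m\>|^2\le(\lambda_n\lambda_m)^{-1}\mu(X)^2$ giving absolute continuity, the Radon--Nikodym densities $p_{nm}$ with a pointwise positive semidefinite Gram factorization (your $\psi_m(x)$ with components $\<d_k(x)|e_m\>$ is literally the paper's definition $\psi_m(x):=\sum_{k}\<d_k(x)|e_m\>e_k$), and, most valuably, a self-contained proof of d) that the paper outsources to Corollary 5.2 of \cite{HyPeYl}: your computation $\Po(X)^2=J^*\ov\Po(X)JJ^*\ov\Po(X)J=\Po(X)$ when $JJ^*=I$, and conversely the expansion of $\|\ov\Po(X)Je_m-J\Po(X)e_m\|^2$ yielding the intertwining relation $\ov\Po(X)J=J\Po(X)$, after which minimality forces $\ran J=\hi_{\ov\Po}$. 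Both halves of that argument check out.

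One place where you are looser than the theorem demands: in b) you propose to ``pass to the closure of $\lin\{\CHI X\psi_m\}$'' and declare that to be the minimal Kolmogorov decomposition, whereas the statement asserts density in the full direct integral $\int_\Omega^\oplus\hi_{n(x)}\d\mu(x)$ whose fiber dimension $n(x)$ is the exact rank. To close this, note that exactness of the rank makes $\{\psi_m(x)\}_m$ span a dense subspace of each fiber $\hi_{n(x)}$ for $\mu$-a.a.\ $x$; then any vector orthogonal to every $\CHI X\psi_m$ has $\<\phi(x)|\psi_m(x)\>=0$ a.e.\ for each of the countably many $m$ (common null set), hence vanishes a.e. This is routine bookkeeping rather than a gap, but it is needed for your own proof of d), whose converse direction invokes precisely this density.
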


Here $\CHI X$ is the characteristic function of $X$ and
 $\int^\oplus_\Omega\hi_{n(x)}\d\mu(x)$ is the direct integral of Hilbert spaces $\hi_{n(x)}$ where $\hi_l$ is an $l$-dimensional Hilbert space spanned by vectors $e_1,\,e_2,\ldots,e_l$, $\hi_0:=\{0\}$ 
and $\hi_\infty:=\hi$ (if $\dim\hi=\infty$) \cite{dir}.

\begin{proof}
a) and b) follow from Theorems 4.5 and 5.1 of \cite{HyPeYl} by defining $\psi_m(x):=\sum_{k=1}^{n(x)}\<d_k(x)|e_m\>e_k$, and c) follows from b) and  Theorem 3.6 of \cite{HyPeYl}.
Finally, d) follows from Corollary 5.2 of \cite{HyPeYl}.
\end{proof}

By defining operators
$
\A(x):=\sum_{k=1}^{n(x)}\kb{e_k}{d_k(x)}=\sum_{m=1}^{\dim\hi}\kb{\psi_m(x)}{e_m}
$
one can write \cite{sesq}
\begin{equation}\label{eq1}
\Po(X)=\int_X \A(x)^*\A(x)\d\mu(x).
\end{equation}
\begin{theorem}\label{th2}
A POVM $\Po$ is pure if and only if, for any (bounded decomposable \cite{dec}) operator $D=\int_\Omega^\oplus D(x)\d\mu(x)$ on $\hi_{\ov\Po}$
the condition $\int_\Omega \<\psi_n(x)|D(x)\psi_m(x)\>\d\mu(x)=0$ for all $n,m$
implies that $D=0$.
\end{theorem}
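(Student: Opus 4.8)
The plan is to translate purity into a statement about admissible perturbations and then to realize every perturbation as a decomposable operator on the minimal Naimark space $\hi_{\ov\Po}$, exploiting $Je_m=\psi_m$. For a decomposable $D=\int^\oplus_\Omega D(x)\,\d\mu(x)$ one has $\<e_n|J^*DJe_m\>=\<\psi_n|D\psi_m\>=\int_\Omega\<\psi_n(x)|D(x)\psi_m(x)\>\,\d\mu(x)$; since an operator is fixed by its matrix elements in the basis $\{e_n\}$, the theorem's hypothesis is exactly the injectivity statement $J^*DJ=0\Rightarrow D=0$. I would also record the standard dictionary: $\Po$ fails to be pure iff there is a nonzero self-adjoint ($\sigma$-additive) $\Delta$ with $\Po\pm\Delta$ both POVMs. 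Indeed a nontrivial split $\Po=a\Po_1+(1-a)\Po_2$ gives $\Delta=\min\{a,1-a\}(\Po_1-\Po_2)\neq0$ with $\Po\pm\Delta$ convex combinations of $\Po_1,\Po_2$; conversely $\Delta\neq0$ yields $\Po=\tfrac12[(\Po+\Delta)+(\Po-\Delta)]$. Any such $\Delta$ obeys $-\Po(X)\le\Delta(X)\le\Po(X)$ for all $X$ and $\Delta(\Omega)=0$.

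For the \emph{if} direction, given a nonzero self-adjoint contraction $D$ that is decomposable and satisfies $J^*DJ=0$, set $\Delta(X):=J^*\ov\Po(X)DJ$. Because $D$ commutes with the multiplication operators $\ov\Po(X)=\CHI X$, $\Delta$ is self-adjoint and $\sigma$-additive, $\Po(X)\pm\Delta(X)=J^*\ov\Po(X)(I\pm D)J\ge0$, and $\Delta(\Omega)=J^*DJ=0$, so $\Po\pm\Delta$ are POVMs. Minimality of the Kolmogorov decomposition forces $\{\psi_m(x)\}_m$ to span $\hi_{n(x)}$ for $\mu$-almost every $x$, whence $D\neq0$ gives $\Delta\neq0$ and $\Po$ is not pure.

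The \emph{only if} direction is an operator Radon--Nikodym step: from an admissible $\Delta$ I would manufacture $D$ on $\hi_{\ov\Po}$. On the dense span of $\{\CHI X\psi_m\}$ define the sesquilinear form $s_\Delta(\CHI X\psi_n,\CHI Y\psi_m):=\<e_n|\Delta(X\cap Y)e_m\>$. Passing to a common disjoint refinement $\{Z_k\}$ and using $-\Po(Z_k)\le\Delta(Z_k)\le\Po(Z_k)$ gives $|s_\Delta(\xi,\xi)|\le\<\xi|\xi\>$, since $\<\CHI X\psi_n|\CHI Y\psi_m\>=\<e_n|\Po(X\cap Y)e_m\>$. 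Hence $s_\Delta$ extends to a bounded form on $\hi_{\ov\Po}$ and is represented by a self-adjoint contraction $D$ with $\Delta(X)=J^*\ov\Po(X)DJ$; testing $s_\Delta$ on $\CHI Y$-shifted arguments shows $D$ commutes with every $\ov\Po(X)$, i.e.\ $D$ is decomposable, while $J^*DJ=\Delta(\Omega)=0$ and $D\neq0$. Together the two directions show $\Po$ is pure iff no nonzero self-adjoint contraction $D$ satisfies $J^*DJ=0$. Finally I would remove the words ``self-adjoint contraction'': the constraint $J^*DJ=0$ survives $D\mapsto D^*$ (the relevant matrix elements are conjugate-symmetric and $\mu$ is positive), hence passes to $(D\pm D^*)/2$; rescaling these to contractions forces them to vanish, so $D=0$, giving the stated form for arbitrary bounded decomposable $D$.

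The main obstacle is this Radon--Nikodym step. The delicate points are the refinement estimate that makes $s_\Delta$ bounded on all of $\hi_{\ov\Po}$ rather than merely densely defined, and the verification that the domination forces $D$ into the commutant of $\{\ov\Po(X)\}$, which is precisely what encodes decomposability, i.e.\ the measurability of $x\mapsto D(x)$. The perturbation/purity dictionary, the positivity bookkeeping in the \emph{if} direction, and the passage to general bounded $D$ are all routine.
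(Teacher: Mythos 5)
Your proposal is correct, and for the nontrivial direction it takes a genuinely different route from the paper. The easy direction (a nonzero self-adjoint decomposable contraction $D$ with $J^*DJ=0$ yields a convex split) is essentially the paper's own argument in different clothing: the paper directly defines $\Po_\pm(X):=\sum_{n,m}\int_X\<\psi_n(x)|(I\pm D)(x)\psi_m(x)\>\d\mu(x)\kb{e_n}{e_m}$, which is exactly your $\Po(X)\pm\Delta(X)=J^*\ov\Po(X)(I\pm D)J$; one remark here is that your appeal to pointwise spanning of $\{\psi_m(x)\}_m$ in $\hi_{n(x)}$ (true, but needing a measurable field-of-projections argument) can be replaced by the global density of $\lin\{\CHI X\psi_m\}$, which is what minimality literally states and what the paper uses to get $\Po_+\ne\Po_-$. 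The real divergence is in the converse. The paper re-diagonalizes $\Po_\pm$ via Theorem \ref{th1} (using $\Po_\pm\le 2\Po$ and the same dominating measure $\mu$), extracts the $\mu$-a.e.\ fiberwise bound $\|\A(x)_\pm\fii\|\le\sqrt2\,\|\A(x)\fii\|$, builds fiber operators $C_\pm(x)$ with $C_\pm(x)\A(x)\fii=\A(x)_\pm\fii$, and sets $D_\pm(x)=C_\pm(x)^*C_\pm(x)$ --- a constructive, fiberwise Douglas-factorization argument whose delicate point is that the exceptional null set must be uniform over a countable dense subset of $V$. You instead run a global operator Radon--Nikodym argument: the signed operator measure $\Delta$, dominated by $\Po$, induces a Hermitian form $s_\Delta$ on $\lin\{\CHI X\psi_m\}$, your disjoint-refinement estimate $|s_\Delta(\xi,\xi)|\le\<\xi|\xi\>$ is correct and is the crux (it also settles well-definedness via polarization), Riesz representation gives a bounded self-adjoint $D$ with $\Delta(X)=J^*\ov\Po(X)DJ$, and decomposability follows because $D$ commutes with every $\ov\Po(X)$, hence lies in the commutant of $L^\infty(\Omega,\mu,\C)$ --- the same von Neumann commutant fact the paper itself invokes (with the citation \cite{von}) in the proof of Theorem \ref{th3}. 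What each buys: the paper's route stays entirely inside the direct-integral machinery of Theorem \ref{th1} and produces $D_\pm(x)$ explicitly; yours avoids re-diagonalizing $\Po_\pm$ and the pointwise a.e.\ bookkeeping, more closely parallels the finite-dimensional proof cited in \cite{DA}, but outsources decomposability to the commutant theorem (legitimate here, since $\hi_{\ov\Po}$ is separable). Your final reduction from arbitrary bounded decomposable $D$ to self-adjoint contractions coincides with the paper's opening move.
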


The above condition can also be written in the form
$
J^*DJ=\int_\Omega \A(x)^*D(x)\A(x)\d\mu(x)=0,
$
or in the form
\begin{equation}\label{eq2}
\int_\Omega\sum_{k,l=1}^{n(x)}\<e_k|D(x)e_l\>\kb{d_k(x)}{d_l(x)}\d\mu(x)=0.
\end{equation}
Hence, Theorem \ref{th2} is a ('continuous' and infinite) generalization of \cite{DA}.
Formally, one could say that $\Po$ is pure iff 'the (overcomplete) system of generalized coherent states $d_k(x)$ generates a linearly independent set of operators $\kb{d_k(x)}{d_l(x)}$  in the sense that \eqref{eq2} implies $D=0$.'

Since $\<\psi_n|\psi_m\>=\int_\Omega\<\psi_n(x)|\psi_m(x)\>\d\mu(x)=\delta_{nm}$
one can define a projection $\P:=JJ^*=\sum_{m=1}^{\dim\hi}\kb{\psi_m}{\psi_m}$ and the above condition equals $\P D\P=0$. If $\Po$ is a PVM then $\P=I$ and $\P D\P=D$ so that any PVM is pure.

\begin{proof}
Suppose that there exists a nonzero bounded $\int_\Omega^\oplus D(x)\d\mu(x)$ such that
$\int_\Omega \<\psi_n(x)|D(x)\psi_m(x)\>\d\mu(x)=0$ for all $n,\, m$.
Redefining $D$ as $i(D-D^*)$ (if $D^*\ne D$)
and then scaling $D$ by $1/\|D\|$, one may assume that $D^*=D$, $\|D\|\le 1$ and, thus, $D_\pm:=I\pm D\ge 0$ and $D_+\ne D_-$.
Since vectors $\CHI X \psi_m$ span $\hi_{\ov\Po}$ there exists a (m.) set $X'$ and $n',\,m'$ such that
$\int_{X'}\<\psi_{n'}(x)|D_+(x)\psi_{m'}(x)\>\d\mu(x)\ne\int_{X'}\<\psi_{n'}(x)|D_-(x)\psi_{m'}(x)\>\d\mu(x)$ implying that POVMs $\Po_\pm(X):=\sum_{n,m}\int_X\<\psi_n(x)|D_\pm(x)\psi_m(x)\>\d\mu(x)\kb{e_n}{e_m}$ are distinct and $\Po=(\Po_++\Po_-)/2$ so that $\Po$ is not pure.

Suppose then that $\Po$ is not pure, that is, of the form $\Po=(\Po_++\Po_-)/2$ for some POVMs $\Po_\pm$, $\Po_+\ne \Po_-$. Now $\Po_\pm(X)\le 2\Po(X)$
so that (by Theorem \ref{th1} and \cite{mu})
$
\Po_{\pm}(X)=\sum_{n,m=1}^{\dim\hi}\int_X\<\psi^\pm_n(x)|\psi^\pm_m(x)\>\d\mu(x)\kb{e_n}{e_m} = \int_X \A(x)_{\pm}^*\A(x)_{\pm}\d\mu(x)
$
where
$\A(x)_\pm:=\sum_{m}\kb{\psi^\pm_m(x)}{e_m}$.
In addition, $\<\fii|\Po_\pm(X)\fii\>\le 2\<\fii|\Po(X)\fii\>$ (for all $\fii\in V$) implies that $\|\A(x)_\pm\fii\|\le\sqrt2\|\A(x)\fii\|$ (for all $\fii\in V$) holds for $\mu$-a.a.\ $x\in\Omega$.
Hence, one can define bounded (well-defined) operators $C_\pm(x)$ on $\hi_{n(x)}$ as follows: (a) define
$C_\pm(x)\big(\A(x)\fii\big):=\A(x)_\pm\fii$, (b) extend $C_\pm(x)$ to the closure of $\A(x)V$, and 
(c) extend $C_\pm(x)$ to the whole fiber $\hi_{n(x)}$ by setting $C_\pm(x)$ to zero on the orthogonal complement of the closure of $\A(x)V$.
Define then (linear) operators $C_\pm$ by $(C_\pm\psi)(x):=C_\pm(x)\psi(x)$ where 
$\psi$ is a linear combination of vectors $\CHI X\psi_m$.
Since $\|C_\pm(x)\|\le\sqrt2$, $C_\pm(\CHI X\psi_m)=\CHI X\psi^\pm_m$ and vectors
$\CHI X\psi_m$ span $\hi_{\ov\Po}$, one can extend $C_\pm$ to the whole space $\hi_{\ov\Po}$ and $C_\pm=\int_\Omega^\oplus C_\pm(x)\d\mu(x)$ is bounded.
Define then $D_\pm(x):=C_\pm(x)^*C_\pm(x)$ to get
$
\Po_{\pm}(X)=\sum_{n,m=1}^{\dim\hi}\int_X\<\psi_n(x)|D_\pm(x)\psi_m(x)\>\d\mu(x)\kb{e_n}{e_m}
$
since $C_\pm(x)\psi_m(x)=\psi^\pm_m(x)$.
From the assumption $\Po_+\ne \Po_-$ one gets
$
D:=\int_\Omega^\oplus[D(x)_+-D(x)_-]\d\mu(x)\ne 0.
$
But, for all $n,m$,
$
\int_\Omega \<\psi_n(x)|D(x)\psi_m(x)\>\d\mu(x)=\<e_n|[\Po_+(\Omega)-\Po_-(\Omega)]e_m\>=\delta_{nm}-\delta_{nm}=0.
$
\end{proof}

From Theorem \ref{th2} one gets the following necessary conditions for $\Po$ to be pure. Let $\Po$ be a pure POVM:
\begin{itemize}
\item For any bounded (m.) function $f(x)\in\C$, the condition
$\int_\Omega f(x)\d\Po(x)=0$ implies $f(x)=0$ (for $\mu$-a.a.\ $x\in\Omega$).
(Put $D=f$ in Theorem \ref{th2}.)
\item For any {\it fixed} $X\subseteq\Omega$,  the condition
$\int_X \<\psi_n(x)|D(x)\psi_m(x)\>\d\mu(x)=0$ for all $n,m$
implies that $D(x)=0$ for $\mu$-a.a.\ $x\in X$.
(Replace $D$ with $\CHI X D$ in Theorem \ref{th2}.)
\item For any disjoint (m.) sets $X_1,\ldots, X_p$ such that $\Po(X_i)\ne 0$
the condition $\sum_{i=1}^p c_i \Po(X_i)=0$ implies $c_1=\cdots=c_p=0$, i.e., effects $\Po(X_i)$ are linearly independent. (Now $D=\sum_i c_i\CHI{X_i}$.)
 \end{itemize}
Next we introduce a simple concrete polynomial method for finding pure (continuous) quantum measurements.


Assume that $\hi_{\ov\Po}=L^2(\Omega,\mu,\hi_l)$ \cite{p}.
Usually in physically relevant 'continuous' cases $\Omega\subseteq\R^p$ and, by choosing suitable coordinates, $\Omega$ is of the form $\I:=\I_1\times\cdots\times\I_p$ where $\I_i\subseteq\R$ is an interval. (Without restricting generality, we may even assume that any $\I_i$ is either $[-1,1]$, $[0,\infty)$ or $\R$.) 
Moreover, in practice $\d\mu(x)=w_1(x^1)\cdots w_p(x^p)\d x^1\cdots\d x^p$ (where $x=(x^1,\ldots,x^p)$ and any 'weight function' $w_i(x^i)>0$ is integrable) and an ON basis of $L^2(\I,\mu,\hi)$ is $\{f^1_{k_1}\otimes\cdots\otimes f^p_{k_p}\otimes e_n\}$ ($n=1,\ldots,l$) where $\{f^i_{k_i}\}$ is an ON polynomial basis of $L^2(\I_i,w_i(x^i)\d x^i,\C)$ \cite{poly}. 
For simplicity, we assume that $n(x)\equiv 1$, i.e..\ $l=1$ and $\hi_{\ov\Po}\cong L^2(\I,\mu,\C)$. Hence $\psi_m(x)\in\C$.
{\it Suppose then that any $\psi_m(x)$ is a polynomial.} (Otherwise  $\psi_m(x)$ can be approximated by a polynomial.)
From Theorem \ref{th2} we get the {\it polynomial method}:
\begin{proposition}\label{pr1}
$\Po$ is pure if the linear span of $\big\{\overline{\psi_n(x)}\psi_m(x)\big\}_{n,m}$ contains all polynomials.
\end{proposition}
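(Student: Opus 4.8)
The plan is to specialize Theorem~\ref{th2} to the present setting and then exploit the density of polynomials in $L^2(\I,\mu,\C)$. Since $n(x)\equiv1$, each fiber $\hi_{n(x)}=\hi_1\cong\C$ is one-dimensional, so any bounded decomposable operator $D=\int_\Omega^\oplus D(x)\d\mu(x)$ on $\hi_{\ov\Po}$ is nothing but a bounded measurable scalar function $D(x)\in\C$. In this case the criterion of Theorem~\ref{th2} reads: $\Po$ is pure provided the only bounded $D(x)$ satisfying
\[
\int_\Omega \ov{\psi_n(x)}\,D(x)\,\psi_m(x)\,\d\mu(x)=0\quad\text{for all }n,m
\]
is $D(x)=0$ for $\mu$-a.a.\ $x$. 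So I would take an arbitrary such $D$ and aim to show it vanishes.

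First I would rewrite the vanishing condition as an orthogonality relation in the Hilbert space $L^2(\I,\mu,\C)$. Because $\mu$ is a probability measure, every bounded $D$ lies in $L^2(\I,\mu,\C)$, and the integral above is precisely the inner product $\<\psi_n\ov{\psi_m}\,|\,D\>$. Hence the hypothesis says exactly that $D$ is orthogonal to every product $\psi_n(x)\ov{\psi_m(x)}$, and therefore to the closed linear span of the family $\{\psi_n\ov{\psi_m}\}_{n,m}$.

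Next I would transfer the assumption of the proposition to this family by conjugation. By hypothesis the span of $\{\ov{\psi_n(x)}\psi_m(x)\}_{n,m}$ contains every polynomial; taking complex conjugates of such a representation and using that the variables $x^i$ are real (so polynomials are closed under conjugation) shows that the span of $\{\psi_n(x)\ov{\psi_m(x)}\}_{n,m}$ likewise contains every polynomial. Finally, the tensor products $f^1_{k_1}\otimes\cdots\otimes f^p_{k_p}$ of the ON polynomial bases form an ON basis of $L^2(\I,\mu,\C)$ consisting of polynomials, so polynomials are dense. Consequently $D$ is orthogonal to a dense subspace, which forces $D=0$ in $L^2(\I,\mu,\C)$, i.e.\ $D(x)=0$ for $\mu$-a.a.\ $x$; Theorem~\ref{th2} then yields purity.

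The argument is essentially routine once the criterion of Theorem~\ref{th2} is in hand, and there is no deep obstacle — which is precisely what makes this ``polynomial method'' a convenient practical test. The only points demanding care are the bookkeeping of complex conjugates relating $\{\ov{\psi_n}\psi_m\}$ to $\{\psi_n\ov{\psi_m}\}$, and the observation (needed to place $D$ in a Hilbert space where orthogonality to a dense set forces vanishing) that boundedness of $D$ together with finiteness of $\mu$ gives $D\in L^2(\I,\mu,\C)$.
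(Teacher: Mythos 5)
Your proof is correct and follows essentially the same route as the paper's: both specialize Theorem~\ref{th2} to the case $n(x)\equiv 1$, where bounded decomposable operators reduce to bounded measurable scalar functions, note that such a function lies in $L^2(\I,\mu,\C)$ because $\mu$ is finite, and conclude from the density of polynomials (together with the hypothesis on the span) that the orthogonality conditions force $D=0$. The only cosmetic difference is your conjugation bookkeeping, which is actually automatic: the family $\{\overline{\psi_n}\psi_m\}_{n,m}$ is closed under complex conjugation (swap $n$ and $m$), so its span already coincides with that of $\{\psi_n\overline{\psi_m}\}_{n,m}$.
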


\begin{proof}
$\Po$ is pure if and only if, for any (m.) bounded $\lambda$, 
$\int_\Omega \lambda(x)\overline{\psi_n(x)}\psi_m(x)\d\mu(x)=0$ for all $n,m$ implies $\lambda=0$.
Assume that $\lin\{\overline{\psi_n(x)}\psi_m(x)\}_{n,m}$ contains all polynomials. Since (any bounded) $\lambda\in L^2(\I,\mu,\C)$ and the space of all polynomials is dense in $L^2(\I,\mu,\C)$ we get the claim of the proposition.
\end{proof}

For example, let $\Qo(X)=\CHI X$, $X\subseteq\R$, be the PVM of the canonical position operator $(Q\psi)(x)=x\psi(x)$ of a particle moving on a line. Using Hermite functions $h_n(x)=c_n H_n(x)e^{-x^2/2}$ we can write $\Qo(X)=\sum_{m,n=0}^\infty\int_X h_n(x)h_m(x)\d x\kb{h_n}{h_m}$.
Let $P_k:=I-\kb{h_k}{h_k}$ be a projection, $\hi=P_kL^2(\R,\d x,\C)$, $\d\mu(x)=e^{-x^2}\d x$, and $\Qo_k(X)=P_k\Qo(X)P_k$ a POVM with vectors $\psi_n(x)=c_n H_{n}(x)$, $n\ne k$.
If, say, $k=2$ then $\Qo_2$ is pure by the polynomial method (since $\{H_n(x)H_m(x)\}_{n\ne 2,m\ne 2}$ contains at least one polynomial of each degree: $H_0(x)H_m(x)$ is a polynomial of degree $m\ne 2$ and $H_1(x)H_1(x)$ is a polynomial of degree 2). 
Similarly, using the polynomial method, one easily sees that the measurement of the (quantum optical) $Q$-function \cite{holevo} and
 the canonical phase measurement \cite{HePe} are pure. 
Next we show that any PVM $\Po$ can be connected to any ($\Po$-continuous) POVM $\Po'$ via a channel $\Phi$ \cite{ch}.


Let $\Po$ and $\Po'$ be POVMs with the same outcome space $\Omega$ but acting possibly different (separable) Hilbert spaces $\hi$ and $\hi'$. Let $\{e'_n\}$ be an ON basis of $\hi'$.
Similarly, as is the case of $\Po$ (see Theorem 1), we let $\mu'(X)$, $n'(x)$, $\psi_n'(x)$, etc.\ denote the corresponding maps related to $\Po'$.
Suppose that there exists a channel $\Phi$ such that $\Phi^*(\Po(X))=\Po'(X)$ for all $X$. Then, if $\Po(X)=0$ one has $\Po'(X)=0$ so that $\d\mu'(x)=w(x)\d\mu(x)$
where $w(x)\ge 0$; we say that  $\Po'$ is {\it $\Po$-continuous.}
If $\Po'$ is $\Po$-continuous, one can absorb $\sqrt{w(x)}$ into functions $\psi'_n(x)$ and redefine $\psi'_n(x)$ to be $\sqrt{w(x)}\psi'_n(x)$. Hence, without restricting generality, we may assume that $\mu'(X)=\mu(X)$. 

\begin{theorem}\label{th3}
a) \
There exists a channel $\Phi$ such that $\Phi^*(\Po(X))=\Po'(X)$ for all $X$ if and only if $\Po'$ is $\Po$-continuous, there exist vectors $v^s_n$ in a (separable) Hilbert space $\mi$ 
such that 
 $\sum_{s=1}^{\dim\hi}\<v_n^s|v_m^s\>=\delta_{nm}$, and there exists an isometry $W=\int_\Omega W(x)\d\mu(x)$ from $\hi'_{\ov\Po'}$ to $\mi\otimes\hi_{\ov\Po}$ such that
$W(x)\psi'_n(x)=\sum_{s=1}^{\dim\hi}v_n^s\otimes\psi_s(x)$. \cite{St} \\
b) \
If $\Po$ is sharp and $\Po'$ \emph{any} $\Po$-continuous POVM then there exists a channel such that $\Phi^*(\Po(X))=\Po'(X)$ for all $X$. \\
c) \
If there exists a channel $\Phi$ such that $\Phi^*(\Po(X))=\Po'(X)$ for all $X$ then $\Phi^*(B)=\ov\Phi^*(JBJ^*)$ for all bounded operators $B$ on $\hi$, where $\ov\Phi$ is a channel connecting the dilation $\ov \Po$ to $\Po'$. 
\end{theorem}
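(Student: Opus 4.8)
The plan is to reduce all three parts to the minimal Naimark dilation of Theorem \ref{th1} combined with Stinespring's theorem \cite{St}. For the \emph{only if} direction of a) I would start from a channel $\Phi$ with $\Phi^*(\Po(X))=\Po'(X)$ and dilate its (normal, unital, completely positive) dual: there is a separable space $\mi$ and an isometry $Y\colon\hi'\to\mi\otimes\hi$ with $\Phi^*(B)=Y^*(I\otimes B)Y$. Expanding $Ye'_n=\sum_s v_n^s\otimes e_s$ with $v_n^s\in\mi$, the identity $Y^*Y=I$ is precisely $\sum_s\<v_n^s|v_m^s\>=\delta_{nm}$. Substituting $\Po(X)=J^*\CHI X J$ gives $\Po'(X)=Z^*(I\otimes\CHI X)Z$ with $Z:=(I\otimes J)Y$, which exhibits $\big(\mi\otimes\hi_{\ov\Po},\,I\otimes\CHI X,\,Z\big)$ as a (generally non-minimal) Naimark dilation of $\Po'$.

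By the uniqueness of the minimal dilation in Theorem \ref{th1}, this larger dilation must contain the minimal one, so there is an isometry $W\colon\hi'_{\ov\Po'}\to\mi\otimes\hi_{\ov\Po}$ with $WJ'=Z$ that intertwines the two multiplication PVMs, $(I\otimes\CHI X)W=W\CHI X$. Commuting with the whole diagonal algebra forces $W$ to be decomposable, $W=\int^\oplus_\Omega W(x)\d\mu(x)$, and evaluating $WJ'=(I\otimes J)Y$ on $e'_n$ and passing to fibres yields $W(x)\psi'_n(x)=\sum_s v_n^s\otimes\psi_s(x)$. For the converse I would reverse the construction: set $Ye'_n:=\sum_s v_n^s\otimes e_s$ (the series converges and $Y$ is isometric exactly because $\sum_s\<v_n^s|v_m^s\>=\delta_{nm}$), let $\Phi^*(B):=Y^*(I\otimes B)Y$, and verify $\Phi^*(\Po(X))=(WJ')^*(I\otimes\CHI X)WJ'=(J')^*\CHI X J'=\Po'(X)$, using $(I\otimes J)Y=WJ'$ and the decomposability of $W$, which gives $W^*(I\otimes\CHI X)W=\CHI X$.

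For b) I would give a direct construction. Since $\Po$ is sharp, Theorem \ref{th1}d lets me identify $\hi$ with $\hi_{\ov\Po}$ so that $\Po(X)=\CHI X$ is multiplication on $\int^\oplus_\Omega\hi_{n(x)}\d\mu(x)$. Picking a measurable field of unit vectors $\eta(x)\in\hi_{n(x)}$ and an ON basis $\{g_j(x)\}$ of each fibre $\hi'_{n'(x)}$, the decomposable operators $K_j:=\int^\oplus_\Omega\kb{\eta(x)}{g_j(x)}\d\mu(x)$ satisfy $\sum_j K_j^*K_j=I$, so $L_j:=K_jJ'$ define a channel through $\Phi^*(B):=\sum_j L_j^*BL_j$. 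Because each $K_j$ is decomposable it commutes with the multiplication operators, whence $\sum_j K_j^*\CHI X K_j=\CHI X$ on $\hi'_{\ov\Po'}$ and $\Phi^*(\Po(X))=(J')^*\CHI X J'=\Po'(X)$; the one idea is that a decomposable channel automatically preserves the multiplication PVM, after which the Naimark isometry $J'$ of $\Po'$ finishes the job.

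Part c) is then immediate: given $\Phi$ I would define $\ov\Phi^*(C):=\Phi^*(J^*CJ)$ for bounded $C$ on $\hi_{\ov\Po}$. This map is normal and completely positive, and unital because $J^*J=I$, so $\ov\Phi$ is a channel; moreover $\ov\Phi^*(\ov\Po(X))=\Phi^*(J^*\CHI X J)=\Phi^*(\Po(X))=\Po'(X)$, so $\ov\Phi$ connects $\ov\Po$ to $\Po'$, while $\ov\Phi^*(JBJ^*)=\Phi^*(J^*JBJ^*J)=\Phi^*(B)$ is the stated identity. I expect the main obstacle to sit entirely in a), namely making the direct-integral steps rigorous: the existence of a \emph{normal} Stinespring dilation, the uniqueness of the minimal Naimark dilation together with the decomposability and measurability of the fibres $W(x)$ of the intertwining isometry, and the convergence of $\sum_s v_n^s\otimes e_s$ when $\dim\hi=\infty$. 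These are exactly the points where one must lean on the diagonalization machinery of \cite{HyPeYl} rather than on finite-dimensional linear algebra.
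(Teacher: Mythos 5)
Your proposal is sound, and it divides naturally into one part that coincides with the paper and two parts that take genuinely different routes. In a) your argument is essentially the paper's: the paper starts from a Kraus decomposition $\Phi^*(B)=\sum_k A_k^*BA_k$ and packages matrix elements into vectors $v_n^s=\sum_k\<e_s|A_ke'_n\>f_k$ (its note \cite{St} records that this is exactly the Stinespring form $Y^*(B\otimes I)Y$ you begin with), and it obtains $W$ by defining it on the total set $\{\CHI X\psi'_n\}$ and checking inner products via condition \eqref{sep} --- which is precisely the proof of the minimal-dilation uniqueness lemma that you invoke abstractly; both arguments then deduce decomposability of $W$ from commutation with $L^\infty(\Omega,\mu,\C)$ \cite{von}. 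One point you leave implicit but should state, since it is part of the ``iff'': the existence of $\Phi$ forces $\Po$-continuity ($\Po(X)=0$ implies $\Po'(X)=\Phi^*(\Po(X))=0$), which is what licenses taking $\mu'=\mu$ and hence the fibre formula; the paper settles this in the paragraph preceding the theorem. In b) you genuinely diverge: the paper constructs the vectors $v_n^s$ explicitly by expanding the functions $x\mapsto\<e_k|\psi'_n(x)\>$ in an ON basis $\{\eta_s\}$ of $L^2(\Omega',\mu|_{\Omega'},\C)$, extending $\{\eta_s e_1\}$ to an ON basis $\{\psi_s\}$ of $\hi_{\ov\Po}$ (possible exactly because $\Po$ is sharp, by Theorem \ref{th1}d), and verifying \eqref{sep}; you instead build a decomposable Kraus family $K_j=\int^\oplus_\Omega\kb{\eta(x)}{g_j(x)}\d\mu(x)$ and compose with $J'$. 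Your route is more direct and bypasses a) entirely, making the mechanism transparent (decomposable operators intertwine the multiplication PVMs); the paper's route has the advantage of exhibiting the data $v_n^s$ demanded by a). However, your b) has one unaddressed degeneracy that the paper handles explicitly via $\Omega'$: on the set where $n(x)=0$ there is no unit vector $\eta(x)\in\hi_{n(x)}$, so you must restrict to $\Omega'=\{x:n(x)\ne0\}$ and observe that $\Po$-continuity forces $n'(x)=0$ for $\mu$-a.a.\ $x$ outside $\Omega'$, where one sets $K_j(x)=0$; without this remark the field $\eta(x)$ is ill-defined. In c) the paper writes down $\ov\Phi^*(\ov B)=\sum_{n,m}\sum_{s,t}\<v_n^s|v_m^t\>\<\psi_s|\ov B\psi_t\>\kb{e'_n}{e'_m}$ and verifies the claims by computation; your definition $\ov\Phi^*(C):=\Phi^*(J^*CJ)$ is in fact the same map (substitute $\<e_s|J^*CJe_t\>=\<\psi_s|C\psi_t\>$ into the paper's \eqref{hep}) but is cleaner, since channelhood and both identities $\ov\Phi^*(\ov\Po(X))=\Po'(X)$ and $\ov\Phi^*(JBJ^*)=\Phi^*(B)$ fall out immediately from $J^*J=I$ and Theorem \ref{th1}c.
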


\begin{proof}
a) \
Any channel $\Phi$ has a Kraus decomposition $\Phi^*(B)=\sum_{k=1}^{N}A_k^*BA_k$ (ultraweakly) where $B:\,\hi\to\hi$ and $A_k:\,\hi'\to\hi$ are bounded operators and $N\le\dim\hi\dim\hi'$ \cite{Kraus}.
Let $\mi$ be any Hilbert space with an ON basis $\{f_k\}_{k=1}^\infty$.
Defining $v_n^s:=\sum_{k}\<e_s|A_ke'_n\>f_k\in\mi$ one gets
\begin{eqnarray}\label{hep}\nonumber
\<e_n'|\Phi^*(B)e_m'\>
&=&\sum_{k=1}^N\sum_{s,t=1}^{\dim\hi}\<e_n'|A_k^*e_s\>\<e_s|Be_t\>\<e_t|A_ke_m'\>\\
&=&\sum_{s,t=1}^{\dim\hi}\<v_n^s|v_m^t\>\<e_s|Be_t\>
\end{eqnarray}
and $\sum_{s=1}^{\dim\hi}\<v_n^s|v_m^s\>=\delta_{nm}$. Especially,
$\|v_n^s\|^2\le\sum_{s}\|v_n^s\|^2=1$. 
Conversely, if there exist a Hilbert space $\mi$ (with an ON basis $\{f_k\}_{k=1}^{\dim\mi}$) and vectors $v_n^s\in\mi$ such that $\sum_{s=1}^{\dim\hi}\<v_n^s|v_m^s\>=\delta_{nm}$, one can define a map $\Phi^*$ by equation \eqref{hep} and (bounded) operators $A_k:=\sum_{s,n}\<f_k|v_n^s\>\kb{e_s}{e'_n}$, $\sum_k A_k^*A_k=I$, to get $\Phi^*(B)=\sum_{k=1}^{\dim\mi}A_k^*BA_k$ so that $\Phi$ is a channel.
From \eqref{hep} 
we get the following fact:
There exists a channel $\Phi$ such that $\Phi^*\Po=\Po'$ iff there exist vectors $v^s_n\in\mi$, $\sum_{s=1}^{\dim\hi}\<v_n^s|v_m^s\>=\delta_{nm}$, such that (for all $X$)
\begin{eqnarray}\label{sep}\nonumber
&&\int_X\<\psi_n'(x)|\psi_m'(x)\>\d\mu(x) =\sum_{s,t=1}^{\dim\hi}\<v_n^s|v_m^t\>\<\psi_s|\CHI X\psi_t\>\\
&&=\sum_{s,t=1}^{\dim\hi}\int_X\<v_n^s\otimes\psi_s(x)|v_m^t\otimes\psi_t(x)\>\d\mu(x).
\end{eqnarray}
If there exists a (decomposable) isometry $W$ such that 
$W(x)\psi'_n(x)=\sum_{s=1}^{\dim\hi}v_n^s\otimes\psi_s(x)$
then \eqref{sep} clearly follows. Conversely, if \eqref{sep} holds then
$
\<\CHI X\psi_n'|\CHI{\tilde X}\psi_m'\>=\sum_{s,t=1}^{\dim\hi}\<v_n^s\otimes\CHI X\psi_s|v_m^t\otimes\CHI{\tilde X}\psi_t\>
$
and, since vectors $\CHI X\psi_n'$ span $\hi_{\ov\Po'}$ by Theorem 1,
there exists an isometry $W:\,\hi'_{\ov\Po'}\to\mi\otimes\hi_{\ov\Po}$ such that
$W(\CHI X\psi'_n)=\sum_s v_n^s\otimes(\CHI X\psi_s)$.
But now $W$ commutes with a PVM $\CHI X$ so that it commutes with the von Neumann algebra $L^\infty(\Omega,\mu,\C)$ and is thus decomposable \cite{von}.

b) Let $\Po$ be a PVM and $\Po'$ a $\Po$-continuous POVM.
Let $\Omega'$ be a (m.) subset of $\Omega$ such that $\Omega'$ consists of points $x$ for which $n(x)\ne 0$. (Note that if $n(x)=0$ then $n'(x)=0$ (for $\mu$-a.a.\ $x$) by the $\Po$-continuity of $\Po'$.) Let $\mu|_{\Omega'}$ be a restriction of $\mu$ to $\Omega'$, and let
$\{\eta_s\}_{s=1}^M$ be an ON basis of $L^2(\Omega',\mu|_{\Omega'},\C)$ (which is separable since $\hi_\Po=\hi_{\ov\Po}$ is separable by Theorem 1).
Extend $\{\eta_s e_1\}_{s=1}^M$ to an ON basis $\{\psi_s\}_{s=1}^{\dim\hi}$
of $\hi_\Po$ (note that this forces $M\le\dim\hi$).
Since functions $x\mapsto\<e_k|\psi'_n(x)\>$ belong to $L^2(\Omega',\mu|_{\Omega'},\C)$
they can be represented as $L^2$-convergent series with respect to the basis $\{\eta_s\}_{s=1}^M$ and, hence, $\<e_k|\psi'_n(x)\>=\sum_{s=1}^{\dim\hi}c_{kn}^s\<e_1|\psi_s(x)\>$ where $\sum_{s,k}\overline{c_{kn}^s}{c_{km}^s}=\delta_{nm}$.
Define vectors $v^s_n:=\sum_{k=1}^\infty c_{kn}^s f_k\in\mi$. 
Now
$
\<\psi_n'|\CHI X\psi_m'\>
=\sum_{s,t}\<v_n^s|v_m^t\>\<\psi_s\CHI X|\psi_t\>
$
so that there exists a channel $\Phi$ such that $\Phi^*\Po=\Po'$ by \eqref{sep}.

c) Assume that $\Phi^*\Po=\Po'$ and let $v^s_n$'s be the vectors associated to $\Phi$. 
It is easy to check that $\ov\Phi^*(\ov B):=\sum_{n,m=1}^{\dim\hi'}\sum_{s,t=1}^{\dim\hi}\< v_n^s|v_m^t\>\<\psi_s|\ov B\psi_t\>\kb{e_n'}{e_m'}$ (where $\ov B$ is a bounded operator on $\hi_{\ov P}$) is a channel for which $\ov\Phi^*(\ov\Po(X))=\Po'(X)$ 
and $\Phi^*(B)=\ov\Phi^*(JBJ^*)$.
\end{proof}

It is reasonable to expect that b) of Theorem \ref{th3} is valid approximately for pure POVMs $\Po$ as the following example demonstrates.

Consider the canonical phase POVM \cite{HePe}
$$\Po(X)=\int_X\kb\theta\theta\frac{\d\theta}{2\pi}=\sum_{n,m=0}^\infty\int_X e^{i(n-m)\theta}\frac{\d\theta}{2\pi}\kb n m$$ where $|\theta\>:=\sum_n e^{in\theta}|n\>$ is the Susskind-Glogower phase state and $X\subseteq\Omega=[0,2\pi)$.
Now $x=\theta$, $\d\mu(\theta)=\d\theta/(2\pi)$, $e_n=|n-1\>$, $n(\theta)\equiv1$, $d_1(\theta)=|\theta\>$, and $\psi_n(\theta)=e^{-in\theta}|0\>\cong e^{-in\theta}$.
Let $\Po'$ be any $\Po$-continuous POVM, i.e.\ $\d\mu'(\theta)=\d\theta$ and
$
\Po'(X)=\sum_{n,m=1}^{\dim\hi'}\int_X\<\psi'_n(\theta)|\psi'_m(\theta)\>\d\theta\kb{e'_n}{e'_m}.
$
Since $\psi_n'\in L^2(\Omega,\d\theta,\hi')$ it has the Fourier series
$\psi_n'(\theta)=\sum_{s=-\infty}^\infty\tilde v_n^s e^{-is\theta}$.
Let $N<\dim\hi'+1$ and $\epsilon>0$. One can pick an $M>0$ such that
$$\left|\int_X\<\psi'_n(\theta)|\psi'_m(\theta)\>\d\theta-\int_X\<\psi^M_n(\theta)|\psi^M_m(\theta)\>\d\theta\right|<\epsilon$$ for all $X$ and $n,m\le N$ where $\psi_n^M(\theta):=\sum_{s=-M}^M \tilde v_n^s e^{-is\theta}$. Define a POVM 
$$
\Po'_M(X)=\sum_{n,m=1}^{\dim\hi'}\int_X\<\psi^M_n(\theta)|\psi^M_m(\theta)\>\d\theta\kb{e'_n}{e'_m}
$$
which is not necessarily normalized. However, one can consider $\Po'_M$ as an approximation of $\Po'$. Since $e^{-iM\theta}\psi_n^M(\theta)=\sum_{s'=-M}^M \tilde v_n^{s'} e^{-i(s'+M)\theta}=\sum_{s=0}^{2 M}v_n^s\psi_s(\theta)$, $v_n^s:=\tilde v_n^{s-M}$, we get from a) of Theorem \ref{th3} that there exists a (possibly nonunital) channel $\Phi_M$ such that $\Phi_M^*(\Po(X))=\Po'_M(X)\approx\Po'(X)$.

In conclusion, we have shown that the traditional observables, PVMs, have a special role among quantum measurements, namely, they are clean and free from any additional extrinsical quantum noise. However, the above example suggests that this result could also be approximately true for all pure measurements and, hence, the most accurate quantum measurements should be described by pure POVMs. 
The physically significant POVMs usually satisfy certain properties of covariance with respect to a symmetry group of the theory \cite{PSAQT82}.
For example, quantum optical phase measurements are described by POVMs which transform covariantly with respect to phase shifts generated by the number operator. However, covariant phase POVMs are never sharp.
Theorem \ref{th2} and Proposition \ref{pr1} provide powerful tools (i) for constructing pure POVMs descibing 'canonical' (covariant) measurements of physical quantities (phase, time, angle, etc.) 
and (ii) for studying whether (or not) a POVM associated to an actual measurement scheme (e.g.\ homodyne detection in quantum optics) is pure. 

\end{document}